\newtheorem{theorem}{Theorem}[section]
\newtheorem{lemma}{Lemma}[section]
\newtheorem{corollary}{Corollary}[section]
\newtheorem{claim}{Claim}[section]
\newcommand{\abs}[1]{\left|#1\right|}
\newcommand{\floor}[1]{\left\lfloor#1\right\rfloor}
\newcommand{\ceil}[1]{\left\lceil#1\right\rceil}
\newcommand{\norm}[2]{\left \|#2\right \|_{#1}}
\newcommand{\R}{\mathbb{R}}
\newcommand{\Z}{\mathbb{Z}}
\newcommand{\eps}{\epsilon}
\DeclareMathOperator*{\argmin}{arg\,min}
\title{Lower Bounds for Sparse Recovery\thanks{This research has been supported in part by David and Lucille Packard Fellowship,  MADALGO
(Center for Massive Data Algorithmics, funded by the Danish National Research
Association) and NSF grant CCF-0728645.
E. Price has been supported in part by Cisco Fellowship.
}}
\author{Khanh Do Ba\\MIT CSAIL \and Piotr Indyk\\MIT CSAIL \and Eric Price\\MIT CSAIL \and David P. Woodruff\\IBM Almaden}
\date{}
\begin{document}
\maketitle

\begin{abstract}
We consider the following $k$-sparse recovery problem: design an $m \times n$ matrix $A$, such that for any signal $x$, given $Ax$ we can efficiently recover $\hat{x}$ satisfying 
$\norm{1}{x-\hat{x}} \le C \min_{k\mbox{-sparse } x'}  \norm{1}{x-x'}$.
It is known that there exist matrices $A$ with this property that have only $O(k \log (n/k) )$ rows.

In this paper we show that this bound is tight. Our bound holds even for the more general  {\em randomized} version of the problem,  where $A$ is a random variable, and the recovery algorithm is required to work for any fixed $x$ with constant probability (over $A$).
\end{abstract}

\section{Introduction}

In recent years, a new ``linear'' approach for obtaining a
succinct approximate representation of $n$-dimensional vectors (or
signals) has been discovered.  For any signal $x$, the representation
is equal to $Ax$, where $A$ is an $m \times n$ matrix, or possibly a random
variable chosen from some distribution over such matrices.  The vector $Ax$
is often referred to as the {\em measurement vector} or {\em sketch}
of $x$.  Although $m$ is typically much smaller than $n$, the sketch
$Ax$ contains plenty of useful information about the signal $x$.  A
particularly useful and well-studied problem is that of stable sparse
recovery: given $Ax$, recover a $k$-{\em sparse} vector $\hat{x}$ (i.e.,
having at most $k$ non-zero components) such that
\begin{equation}
\label{e:lplq}
\norm{p}{x-\hat{x}} \le C \min_{k\mbox{-sparse } x'}  \norm{q}{x-x'}
\end{equation}
for some norm parameters $p$ and $q$ and an approximation factor $C=C(k)$. If the matrix $A$ is random, then Equation~\eqref{e:lplq} should hold for each $x$ with some probability (say, 3/4).
 Sparse recovery has  applications to numerous areas such as data stream computing~\cite{Muthu:survey, I-SSS} and compressed sensing~\cite{CRT06:Stable-Signal, Don06:Compressed-Sensing, DDTLTKB}. 

It is known that there exist matrices $A$ and associated recovery algorithms that produce approximations $\hat{x}$ satisfying Equation~\eqref{e:lplq} with $p=q=1$ (i.e., the ``$\ell_1/\ell_1$ guarantee''), constant $C$ and sketch length $m=O(k \log (n/k))$. In particular, a random Gaussian matrix~\cite{CRT06:Stable-Signal}\footnote{In fact, they even achieve a somewhat stronger $\ell_2/\ell_1$ guarantee, see Section~\ref{sec:related}.} or a random sparse binary matrix~(\cite{BGIKS08}, building on \cite{CCF,CM03b})  has this property with overwhelming probability.
 In comparison, using a {\em non-linear} approach, one can obtain a shorter sketch of length  $O(k)$: it suffices to store the  $k$ coefficients with the largest absolute values, together with their indices. 

Surprisingly, it was not known whether the $O( k  \log (n/k))$ bound for linear sketching could be improved upon in general, although such lower bounds were known to hold under certain restrictions (see section~\ref{sec:related} for a more detailed overview).
This  raised hope that the $O(k)$ bound might be achievable even for general vectors $x$. 
Such a scheme would have been of major practical interest, since the sketch length determines the compression ratio, and for large $n$ any extra $\log n$ factor  worsens that ratio tenfold.


In this paper we show that, unfortunately, such an improvement is not possible.
We address two types of recovery schemes:
\begin{itemize}
\item A {\em deterministic} one, which involves a fixed matrix $A$ and  a recovery algorithm which work for all signals $x$. The aforementioned results  of~\cite{CRT06:Stable-Signal} and others are examples of such schemes.
\item A {\em randomized}  one, where the matrix $A$ is chosen at random from some distribution, and  for each signal $x$ the recovery procedure is correct
with constant probability (say, $3/4$).
Some of the early schemes proposed in the data stream literature (e.g.,~\cite{CCF,CM03b}) belong to this category. 
\end{itemize}

Our main result is that, even in the randomized case, the sketch length  $m$ must be at least $\Omega(k \log(n/k))$.
By the aforementioned result of~\cite{CRT06:Stable-Signal} this bound is tight.

Thus, our results show that the linear compression is inherently more costly than the simple non-linear approach. 

\subsection{Our techniques}

On a high level, our approach is simple and natural, and utilizes the packing approach: we show that any two ``sufficiently'' different vectors $x$ and $x'$ are mapped to images $Ax$ and $Ax'$ that are ``sufficiently'' different themselves, which requires that the image space is ``sufficiently'' high-dimensional. However, the actual arguments are somewhat subtle.

 Consider first the (simpler) deterministic case. We focus on signals $x=y+z$, where $y$ can be thought of as the ``head'' of the signal and $z$ as the ``tail''. The ``head'' vectors $y$ come from a set $Y$ that is a binary  error-correcting code,  with a minimum distance $\Omega(k)$, where each codeword has weight $k$. On the other hand, the ``tail'' vectors $z$ come from an $\ell_1$ ball (say $B$) with a radius that is a small fraction of $k$. It can be seen that for any two elements $y, y' \in Y$, the balls $y+B$ and $y'+B$, as well as their images, must be disjoint. At the same time, since all vectors $x$ live in a ``large''  $\ell_1$ ball $B'$ of radius $O(k)$, all   images $Ax$ must live in a set $A B'$. The key observation is that the set  $A B'$ is a scaled version of $A(y+B)$  and therefore the ratios of their volumes can be bounded by the scaling factor to the power of the dimension $m$. Since the number of elements of $Y$ is large, this gives a lower bound on $m$.
 
Unfortunately, the aforementioned approach does not seem to extend to the  randomized case.
A natural approach would be to use Yao's principle, and focus on showing a lower bound for a scenario where the matrix $A$ is fixed while the vectors $x=y+z$ are ``random''.
 However, this approach fails, in a very strong sense. Specifically, we are able to show that there is a distribution over matrices $A$ with {\em only} $O(k)$ rows so that for a fixed $y \in Y$ and $z$ chosen uniformly at random from the small ball $B$, we can recover $y$ from $A(y+z)$ with high probability. In a nutshell, the reason is that a random vector from $B$ has an $\ell_2$ norm that is much smaller than the $\ell_2$ norm of elements of $Y$ (even though the $\ell_1$ norms are comparable). This means that the vector $x$ is ``almost'' $k$-sparse (in the $\ell_2$ norm), which enables us to  achieve the $O(k)$ measurement bound.
 
 Instead, we resort to an altogether different approach, via {\em communication complexity}~\cite{kn97}. We start by considering a ``discrete'' scenario where both the matrix $A$ and the vectors $x$ have entries restricted to the polynomial range $\{-n^c \ldots n^c\}$ for some $c=O(1)$. In other words, we assume that the matrix and vector entries can be represented using $O(\log n)$ bits. In this setting we show the following: there is a method for encoding a sequence of $d=O(k \log(n/k) \log n)$ bits into a vector $x$, so that any sparse recovery algorithm can recover that sequence given $Ax$. Since each entry of $Ax$ conveys only $O(\log n)$ bits, it follows that the number $m$ of rows of $A$ must be 
 $\Omega(k \log(n/k))$. 
 
  The encoding is performed by taking
 \[ x=\sum_{j=1}^{\log n} D^j x_j , \]
  where $D = O(1)$ and the $x_j$'s are chosen from the
  error-correcting code $Y$ defined as in the deterministic case.  The
  intuition behind this approach is that a good $\ell_1 / \ell_1$
  approximation to $x$ reveals most of the bits of $x_{\log n}$. This
  enables us to identify $x_{\log n}$ exactly using error
  correction. We could then compute $Ax - A x_{\log n} =
  A(\sum_{j=1}^{\log n-1} D^j x_j )$, and identify $x_{\log n -1}
  \ldots x_1$ in a recursive manner. The only obstacle to completing
  this argument is that we would need the recovery algorithm to work
  for {\em all} $x_i$, which would require lower probability of
  algorithm failure (roughly $1/\log n$). To overcome this problem, we
  replace the encoding argument by a reduction from a related
  communication complexity problem called {\sf Augmented
    Indexing}. This problem has been used in the data stream literature
  \cite{CW09, KNW10} to prove lower bounds for linear algebra
  and norm estimation problems. 
  Since the problem has communication complexity of
  $\Omega(d)$, the conclusion follows.
   
   We apply the argument to arbitrary matrices $A$ by representing
   them as a sum $A'+A''$, where $A'$ has $O(\log n)$ bits of
   precision and $A''$ has ``small'' entries.  We then show that $A'x =
   A(x+s)$ for some $s$ with $\norm{1}{s} <
   n^{-\Omega(1)}\norm{1}{x}$.  In the communication game, this means
   we can transmit $A'x$ and recover $x_{\log n}$ from
   $A'(\sum_{j=1}^{\log n} D^j x_j ) = A(\sum_{j=1}^{\log n} D^j x_j +
   s)$.

   One catch is that $s$ depends on $A$.  The recovery algorithm is
   guaranteed to work with probability $3/4$ for any $x$, so it works
   with probability $3/4$ over any distribution on $x$ independent of
   $A$.  However, there is no guarantee about recovery of $x + s$ when
   $s$ depends on $A$ (even if $s$ is tiny).  To deal with this,
   we choose a $u$ uniformly from the $\ell_1$ ball of radius $k$.  We
   can set $\norm{1}{s} \ll k/n$, so $x + u$ and $x + u + s$ are
   distributions with $o(1)$ statistical distance.  Hence recovery
   from $A(x + u + s)$ matches recovery from $A(x + u)$ with
   probability at least $1 - o(1)$, and $\norm{1}{u}$ is small enough
   that successful recovery from $A(x + u)$ identifies $x_{\log n}$.
   Hence we can recover $x_{\log n}$ from $A(x + u + s) = A'x + Au$
   with probability at least $3/4 - o(1) > 1/2$, which means that the
   {\sf Augmented Indexing} reduction applies to arbitrary matrices as
   well.

\subsection{Related Work}
\label{sec:related}

There have been a number of earlier works that have, directly or indirectly,
shown lower bounds for various models of sparse recovery and certain classes
of matrices and algorithms. Specifically, one of the most well-known recovery
algorithms used in compressed sensing is $\ell_1$-minimization, where a
signal $x\in\R^n$ measured by matrix $A$ is reconstructed as
\[
	\hat{x} := \argmin_{x':\,Ax'=Ax} \|x'\|_1.
\]
Kashin and Temlyakov~\cite{KT07} (building on prior work on Gelfand width~\cite{GG,G,Kas}, see also~\cite{Don06:Compressed-Sensing}) gave a characterization of matrices $A$ for which the above recovery algorithm yields the
$\ell_2/\ell_1$ guarantee, i.e.,
\[
	\|x - \hat{x}\|_2 \le C k^{-1/2} \min_{k\mbox{-sparse } x'} \|x - x'\|_1
\]
for some constant $C$, from which it can be shown that such an $A$ must have
$m = \Omega(k\log(n/k))$ rows.

Note that the $\ell_2/\ell_1$ guarantee is somewhat stronger than the $\ell_1/\ell_1$ guarantee investigated in this paper. Specifically, it is easy to observe that if the approximation $\hat{x}$ itself is required to be $O(k)$-sparse, then the $\ell_2/\ell_1$ guarantee implies the $\ell_1/\ell_1$ guarantee (with a somewhat higher approximation constant).
For the sake of simplicity, in this paper we focus mostly on the $\ell_1/\ell_1$ guarantee.
However, our lower bounds apply to the $\ell_2/\ell_1$ guarantee as well: see footnote on page
\pageref{l1l2}.

The results on Gelfand width can be also used to obtain lower bounds for {\em general} recovery algorithms (for the deterministic recovery case), as long as the sparsity parameter $k$ is larger than some constant. This was explicitly stated in~\cite{FPRU}, see also~\cite{Don06:Compressed-Sensing}.

On the other hand, instead of assuming a specific recovery algorithm,
Wainwright~\cite{W07} assumes a specific (randomized) measurement matrix. More
specifically, the author assumes a $k$-sparse
binary signal $x\in\{0,\alpha\}^n$, for some $\alpha>0$, to which is added i.i.d.
standard Gaussian noise in each component. The author then shows that
with a random Gaussian matrix $A$, with each entry also drawn i.i.d. from the
standard Gaussian, we cannot hope to recover $x$ from $Ax$ with any sub-constant
probability of error unless $A$ has $m = \Omega(\frac{1}{\alpha^2} \log \frac{n}{k})$
rows. The author also shows that for $\alpha = \sqrt{1/k}$, this is
tight, i.e., that $m=\Theta(k\log(n/k))$ is both necessary and sufficient. Although
this is only a lower bound for a specific (random) matrix, it is a fairly powerful
one and provides evidence that the often observed upper bound of $O(k\log(n/k))$ is
likely tight.

More recently, Dai and Milenkovic~\cite{DM08}, extending on~\cite{EG88} and~\cite{FR99},
showed an upper bound on superimposed codes that translates to a lower bound on the
number of rows in a compressed sensing matrix that deals only with $k$-sparse
signals but can tolerate measurement noise.
Specifically, if we assume a $k$-sparse signal $x\in([-t,t]\cap\Z)^n$,
and that arbitrary noise $\mu\in\R^n$ with $\|\mu\|_1 < d$ is
added to the measurement vector $Ax$, then if exact recovery is still possible, $A$
must have had $m \ge C k\log n / \log k$ rows, for some constant $C = C(t,d)$ and
sufficiently large $n$ and $k$.\footnote{Here $A$ is assumed to
have its columns normalized to have $\ell_1$-norm 1. This is natural since
otherwise we could simply scale $A$ up to make the image points $Ax$ arbitrarily
far apart, effectively nullifying the noise.}

\section{Preliminaries}

In this paper we focus on recovering sparse approximations $\hat{x}$ that satisfy the following $C$-approximate $\ell_1/\ell_1$ guarantee with sparsity parameter $k$:

\begin{equation}
\label{e:l1l1}
\norm{1}{x-\hat{x}} \leq C \min_{k\mbox{-sparse } x'}\norm{1}{x-x'}.
\end{equation}

We define a $C$-approximate {\em deterministic} $\ell_1/\ell_1$ recovery algorithm  to be a pair $(A, \mathscr{A})$ where $A$ is an $m\times
n$ observation matrix and $\mathscr{A}$ is an algorithm that, for any $x$, maps
$Ax$ (called the {\em sketch} of $x$) to some $\hat{x}$ that satisfies Equation~\eqref{e:l1l1}.

We define a $C$-approximate {\em randomized}  $\ell_1/\ell_1$  recovery algorithm to be a pair $(A, \mathscr{A})$ where $A$ is a {\em random variable} chosen from some distribution over $m\times n$ measurement matrices, and $\mathscr{A}$ is an algorithm which, for any $x$,  maps a pair $(A, Ax)$ to some $\hat{x}$ that satisfies Equation~\eqref{e:l1l1}
with probability at least $3/4$.

We use $B^n_p(r)$ to denote the $\ell_p$ ball of radius $r$ in $\R^n$; we skip the superscript $n$ if it is clear from the context.

For any vector $x$, we use $\|x\|_0$ to denote the ``$\ell_0$ norm of $x$'', i.e., the number of non-zero entries in $x$.

\section{Deterministic Lower Bound}
\label{sec:detlb}

We will prove a lower bound on $m$ for any $C$-approximate
deterministic recovery algorithm.  First we use a discrete volume
bound (Lemma~\ref{g-v}) to find a large set $Y$ of points that are at
least $k$ apart from each other.  Then we use another volume bound
(Lemma~\ref{volumes}) on the images of small $\ell_1$ balls around
each point in $Y$.  If $m$ is too small, some two images collide.  But
the recovery algorithm, applied to a point in the collision, must
yield an answer close to two points in $Y$.  This is impossible, so
$m$ must be large.

\begin{lemma}(Gilbert-Varshamov)\label{g-v}
  For any $q, k \in \mathbb{Z}^+, \epsilon \in \mathbb{R}^+$ with
  $\epsilon < 1-1/q$, there exists a set $Y \subset \{0,1\}^{qk}$ of
  binary vectors with exactly $k$ ones, such that $Y$ has minimum
  Hamming distance $2\epsilon k$ and
  \[
  \log \abs{Y} > (1 - H_{q}(\epsilon))k \log q
  \]
  where $H_q$ is the $q$-ary entropy function $H_q(x) =
  -x\log_q\frac{x}{q-1} - (1-x)\log_q (1-x)$.
\end{lemma}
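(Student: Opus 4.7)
The plan is to derive this constant-weight binary code from a standard $q$-ary code via the natural block encoding, and then invoke the usual Gilbert--Varshamov greedy/volume argument together with the entropy bound on $q$-ary Hamming balls.

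First I would partition the $qk$ coordinates of $\{0,1\}^{qk}$ into $k$ consecutive blocks of size $q$, and restrict attention to binary vectors that have exactly one $1$ per block. Such vectors automatically have Hamming weight $k$, and they are in bijection with $q$-ary strings $c \in \{0,1,\ldots,q-1\}^k$ (the entry $c_i$ records which position within the $i$-th block carries the $1$). The crucial observation is that if $c, c' \in \{0,\ldots,q-1\}^k$ differ in exactly $j$ coordinates, then the corresponding binary vectors differ in exactly $2j$ coordinates, because a disagreement in one block flips two bits. So a $q$-ary code of length $k$ and minimum distance $\epsilon k$ pulls back to a constant-weight binary code of minimum Hamming distance $2\epsilon k$, as required.

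Next I would produce a $q$-ary code $C \subset \{0,\ldots,q-1\}^k$ with minimum distance $\epsilon k$ by the standard greedy argument: starting from the whole space, repeatedly pick a codeword and delete the closed Hamming ball of radius $\epsilon k - 1$ around it. This terminates with $|C| \ge q^k / V_q(\epsilon k - 1)$, where
\[
V_q(r) \;=\; \sum_{j=0}^{r} \binom{k}{j}(q-1)^j
\]
is the size of a $q$-ary Hamming ball of radius $r$ in dimension $k$. The output $Y \subset \{0,1\}^{qk}$ obtained by applying the block encoding to $C$ is then a weight-$k$ code of minimum distance $2\epsilon k$ with $|Y| = |C|$.

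Finally I would apply the well-known entropy bound: for $\epsilon < 1 - 1/q$, one has $V_q(\epsilon k) \le q^{H_q(\epsilon)k}$, which follows by the usual trick of comparing $\sum_{j \le \epsilon k} \binom{k}{j}(q-1)^j$ to a single term of the binomial expansion of $((q-1)+1)^k = q^k$ weighted by $(\epsilon/(1-\epsilon))^{\epsilon k - j}$. Combining this with the greedy lower bound gives
\[
|Y| \;\ge\; \frac{q^k}{V_q(\epsilon k - 1)} \;>\; \frac{q^k}{q^{H_q(\epsilon) k}} \;=\; q^{(1 - H_q(\epsilon))k},
\]
and taking logarithms yields the claimed bound. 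The one technical wrinkle is guaranteeing the strict inequality, but the entropy inequality is not tight at finite $k$ (the ball radius is $\epsilon k - 1$, strictly less than $\epsilon k$), so strictness is automatic. The only mild obstacle in the argument is verifying the entropy estimate carefully in the regime $\epsilon < 1 - 1/q$; everything else is a direct reduction.
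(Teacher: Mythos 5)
Your proposal is correct and takes essentially the same route as the paper: encode a $q$-ary code of length $k$ and distance $\epsilon k$ (obtained from the Gilbert--Varshamov volume bound) into weight-$k$ binary vectors with one $1$ per block, doubling the distance, and then bound the $q$-ary Hamming ball by $q^{H_q(\epsilon)k}$. One small note: in your sketch of the entropy estimate the comparison weight should be $\bigl(\epsilon/((q-1)(1-\epsilon))\bigr)^{\epsilon k - j}$ rather than $(\epsilon/(1-\epsilon))^{\epsilon k - j}$ --- the factor $q-1$ is exactly what makes the hypothesis $\epsilon < 1-1/q$ suffice, as in the paper's appendix claim.
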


See appendix for proof.

\begin{lemma}\label{volumes}
 Take an $m\times n$ real matrix $A$, positive reals $\epsilon, p,
 \lambda$, and $Y \subset B_p^n(\lambda)$.  If $\abs{Y} > (1 +
 1/\epsilon)^m$, then there exist $z, \overline{z} \in
 B_p^n(\eps\lambda)$ and $y, \overline{y} \in Y$ with $y \neq
 \overline{y}$ and $A(y+z) = A(\overline{y}+\overline{z})$.
\end{lemma}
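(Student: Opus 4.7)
The plan is to run a standard volume packing argument in the image of $A$. The key geometric facts are: (i) by the triangle inequality, $Y + B_p^n(\epsilon\lambda) \subseteq B_p^n((1+\epsilon)\lambda)$, so every vector of the form $A(y+z)$ with $y \in Y$ and $z \in B_p^n(\epsilon\lambda)$ lies inside $A\,B_p^n((1+\epsilon)\lambda)$; and (ii) since $A$ is linear and $B_p^n(\cdot)$ scales, we have $A\,B_p^n((1+\epsilon)\lambda) = \frac{1+\epsilon}{\epsilon}\cdot A\,B_p^n(\epsilon\lambda)$ as subsets of the image of $A$.

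Let $V = \operatorname{image}(A) \subseteq \R^m$ and let $d = \dim V \le m$. Working with $d$-dimensional Lebesgue measure on $V$, the body $K := A\,B_p^n(\epsilon\lambda)$ is a centrally symmetric convex set of positive $d$-volume (it is the linear image of a full-dimensional convex body, surjecting onto $V$). By fact (ii),
\[
\operatorname{vol}_d\bigl(A\,B_p^n((1+\epsilon)\lambda)\bigr) \;=\; \Bigl(\tfrac{1+\epsilon}{\epsilon}\Bigr)^{d} \operatorname{vol}_d(K) \;=\; (1+1/\epsilon)^d\operatorname{vol}_d(K).
\]

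Now I would argue by contradiction. Suppose no collision exists, i.e., suppose that for all distinct $y,\overline{y}\in Y$ and all $z,\overline{z}\in B_p^n(\epsilon\lambda)$, $A(y+z)\ne A(\overline{y}+\overline{z})$. This says precisely that the translated bodies $\{Ay + K : y \in Y\}$ are pairwise disjoint. By fact (i) they all sit inside $A\,B_p^n((1+\epsilon)\lambda)$, so summing volumes gives
\[
|Y|\cdot\operatorname{vol}_d(K) \;\le\; (1+1/\epsilon)^d\operatorname{vol}_d(K),
\]
hence $|Y|\le(1+1/\epsilon)^d\le(1+1/\epsilon)^m$, contradicting the hypothesis $|Y|>(1+1/\epsilon)^m$. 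Therefore two of the translates must intersect, which yields the desired $y\neq\overline{y}\in Y$ and $z,\overline{z}\in B_p^n(\epsilon\lambda)$ with $A(y+z)=A(\overline{y}+\overline{z})$.

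The only subtle point is the degenerate case where $A$ is rank-deficient: one must remember to measure volumes inside $V$ rather than inside $\R^m$, since $K$ has zero $m$-volume when $d<m$. This is harmless because the packing inequality only gets tighter (we get $(1+1/\epsilon)^d$ instead of $(1+1/\epsilon)^m$), and the hypothesis $|Y|>(1+1/\epsilon)^m\ge(1+1/\epsilon)^d$ still suffices. No estimation or approximation is needed beyond this exact scaling identity, so I don't anticipate any real obstacle.
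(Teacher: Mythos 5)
Your proof is correct and is essentially the paper's own argument: pack the disjoint translates $Ay + A B_p^n(\epsilon\lambda)$ inside $A B_p^n((1+\epsilon)\lambda)$ and compare volumes using the exact scaling factor $(1+1/\epsilon)$. Your extra care about rank-deficient $A$ (measuring $d$-dimensional volume inside the image, where the paper implicitly assumes the image of the ball has positive $m$-volume) is a small but genuine tightening of the same approach, not a different one.
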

\begin{proof}
  If the statement is false, then the images of all $\abs{Y}$ balls $\{y +
  B_p^n(\epsilon\lambda) \mid y \in Y\}$ are disjoint.  However, those
  balls all lie within $B_p^n((1+\epsilon)\lambda)$, by the bound on the norm
  of $Y$.  A volume argument gives the result, as follows.

  Let $S = AB_p^n(1)$ be the image of the $n$-dimensional ball of
  radius $1$ in $m$-dimensional space.  This is a polytope with some
  volume $V$.  The image of $B_p^n(\epsilon\lambda)$ is a linearly
  scaled $S$ with volume $(\epsilon\lambda)^mV$, and the volume of the
  image of $B_p^n((1+\epsilon)\lambda)$ is similar with volume
  $((1+\epsilon)\lambda)^mV$.  If the images of the former are all
  disjoint and lie inside the latter, we have
  $\abs{Y}(\epsilon\lambda)^mV \leq ((1+\epsilon)\lambda)^mV$,
  or $\abs{Y} \leq (1+1/\epsilon)^m$.  If $Y$ has more elements than
  this, the images of some two balls $y + B_p^n(\epsilon\lambda)$ and
  $\overline{y} + B_p^n(\eps\lambda)$ must intersect, implying the lemma.
\end{proof}

\begin{theorem}
  Any $C$-approximate deterministic recovery algorithm must have
  \[
  m \geq \frac{1 - H_{\floor{n/k}}(1/2)}{\log (4 + 2C)} k \log\floor{\frac{n}{k}}.
  \]
\end{theorem}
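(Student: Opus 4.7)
The plan is to combine the two lemmas by packing a Gilbert–Varshamov code into the $\ell_1$ ball $B_1^n(k)$ and then deriving a contradiction whenever $m$ is too small, using the $\ell_1/\ell_1$ recovery guarantee to separate codewords.

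First I would set $q = \floor{n/k}$ and apply Lemma \ref{g-v} with $\epsilon_{\mathrm{GV}} = 1/2$ (valid as long as $q \geq 3$; the $q=2$ boundary case can be handled by a trivial adjustment) to obtain a set $Y \subset \{0,1\}^{qk}$ of weight-$k$ binary vectors with pairwise Hamming distance at least $k$ and $\log \abs{Y} > (1 - H_q(1/2)) k \log q$. Embedding $\{0,1\}^{qk}$ into $\R^n$ by appending zero coordinates, we have $Y \subset B_1^n(k)$, and for any distinct $y,\overline{y} \in Y$, $\norm{1}{y - \overline{y}} \geq k$.

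Next I would apply Lemma \ref{volumes} with $p = 1$, $\lambda = k$, and $\epsilon = 1/(2C+3)$, so that $1 + 1/\epsilon = 4 + 2C$. Suppose for contradiction that $m < \frac{(1 - H_q(1/2))}{\log(4+2C)} k \log q$; then $\abs{Y} > (4+2C)^m = (1+1/\epsilon)^m$, and the lemma supplies distinct $y, \overline{y} \in Y$ and $z, \overline{z} \in B_1^n(\epsilon k)$ with $A(y+z) = A(\overline{y} + \overline{z})$. Feeding this common sketch to the recovery algorithm yields some $\hat{x}$. Since $y$ and $\overline{y}$ are themselves $k$-sparse, the $\ell_1/\ell_1$ guarantee applied to the signals $y+z$ and $\overline{y}+\overline{z}$ gives
\[
\norm{1}{(y+z) - \hat{x}} \leq C\norm{1}{z} \leq C \epsilon k, \qquad \norm{1}{(\overline{y}+\overline{z}) - \hat{x}} \leq C \epsilon k.
\]
A four-term triangle inequality then yields
\[
\norm{1}{y - \overline{y}} \leq 2C\epsilon k + \norm{1}{z} + \norm{1}{\overline{z}} \leq 2(C+1)\epsilon k < k,
\]
contradicting $\norm{1}{y - \overline{y}} \geq k$. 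Rearranging the failed inequality gives the stated lower bound on $m$.

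The only delicate point is choosing $\epsilon$: it must be small enough for the triangle inequality to produce a strict contradiction with the code's minimum distance ($2(C+1)\epsilon < 1$), but making $\epsilon$ smaller weakens the volume bound by making $1 + 1/\epsilon$ larger, so the denominator $\log(4+2C)$ in the theorem. The choice $\epsilon = 1/(2C+3)$ is the integer-flavored sweet spot that yields exactly $\log(4+2C)$. I do not expect any serious obstacle beyond this balancing, plus a small sanity check on the boundary case $q=2$ where Gilbert–Varshamov must be invoked with a slightly smaller $\epsilon_{\mathrm{GV}}$ (which only affects constants absorbed into $1 - H_q(1/2)$).
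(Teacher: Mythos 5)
Your proposal is correct and follows essentially the same route as the paper: a Gilbert--Varshamov code of weight-$k$ binary vectors with minimum distance $k$, Lemma~\ref{volumes} with $\epsilon = 1/(2C+3)$ so that $1+1/\epsilon = 4+2C$, and the recovery guarantee plus triangle inequality giving $\norm{1}{y-\overline{y}} \le \frac{2C+2}{2C+3}k < k$, a contradiction. The only cosmetic difference is that you bound the four terms directly rather than first bounding $\norm{1}{y-w} \le (1+C)\norm{1}{z}$ as the paper does, and you explicitly note the vacuous $q=2$ case, which the paper leaves implicit.
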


\begin{proof}
  Let $Y$ be a maximal set of $k$-sparse $n$-dimensional binary
  vectors with minimum Hamming distance $k$, and let $\gamma =
  \frac{1}{3 + 2C}$.  By Lemma~\ref{g-v} with $q = \floor{n/k}$
  we have $\log \abs{Y} > (1 -
  H_{\floor{n/k}}(1/2))k\log{\floor{n/k}}$.

  Suppose that the theorem is not true; then $m < \log \abs{Y} / \log
  (4+2C) = \log \abs{Y} / \log (1 + 1/\gamma)$, or $\abs{Y} > (1 +
  \frac{1}{\gamma})^m$.  Hence Lemma~\ref{volumes} gives us
  some $y, \overline{y} \in Y$ and $z, \overline{z}\in B_1(\gamma k)$ with
  $A(y + z) = A(\overline{y}+\overline{z})$.

  Let $w$ be the result of running the recovery algorithm on $A(y+z)$.  By the
  definition of a deterministic recovery algorithm, we have
  \begin{align*}
  &\norm{1}{y+z-w} \leq C \min_{k\mbox{-sparse } y'} \norm{1}{y+z-y'}\\
  &\norm{1}{y-w} - \norm{1}{z} \leq C \norm{1}{z}\\
  &\norm{1}{y-w}   \leq (1+C) \norm{1}{z}
  								 \leq (1+C)\gamma k
  								 = \tfrac{1+C}{3+2C}\,k,
  \end{align*}
  and similarly $\norm{1}{\overline{y}-w} \leq \frac{1+C}{3+2C}\,k$, so
  \begin{align*}
    \norm{1}{y-\overline{y}} &\leq \norm{1}{y-w} + \norm{1}{\overline{y}-w}
    = \frac{2+2C}{3+2C}k < k.
  \end{align*}
  But this contradicts the definition of $Y$, so $m$ must be large
  enough for the guarantee to hold.
\end{proof}

\begin{corollary}
  If $C$ is a constant bounded away from zero, then $m = \Omega(k\log(n/k))$.
\end{corollary}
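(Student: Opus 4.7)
The corollary follows almost mechanically from the theorem once we control each factor in the expression
\[
m \;\geq\; \frac{1 - H_{\floor{n/k}}(1/2)}{\log(4 + 2C)}\,k\log\floor{n/k}.
\]
Since $C$ is a constant bounded away from $0$, the denominator $\log(4+2C)$ is a positive constant. The plan is therefore to show that $1 - H_{\floor{n/k}}(1/2)$ is bounded below by a positive absolute constant whenever $\floor{n/k}$ is at least some threshold, and to handle the small-$n/k$ regime separately by appealing to the trivial observation that $k \log(n/k) = O(k)$ there.

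For the first step, I would simplify the $q$-ary entropy at $1/2$: plugging in gives
\[
H_q(1/2) = -\tfrac{1}{2}\log_q\tfrac{1}{2(q-1)} - \tfrac{1}{2}\log_q\tfrac{1}{2} = \log_q\bigl(2\sqrt{q-1}\bigr).
\]
For $q \geq 3$ this is strictly less than $1$, and as $q \to \infty$ it tends to $1/2$, so $1 - H_q(1/2) \geq c_0$ for some positive constant $c_0$ (independent of $q$) whenever $q \geq 3$. This is the only step that requires any actual computation, and it is the one place I expect to need care, since $H_2(1/2) = 1$ makes the $q = 2$ case degenerate and we must check that the Gilbert-Varshamov hypothesis $\epsilon < 1 - 1/q$ from Lemma~\ref{g-v} is compatible with $\epsilon = 1/2$ only for $q \geq 3$, which matches the threshold we need.

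Combining: when $\floor{n/k} \geq 3$, the theorem yields
\[
m \;\geq\; \frac{c_0}{\log(4+2C)}\,k\log\floor{n/k} \;=\; \Omega\bigl(k\log(n/k)\bigr),
\]
using $\log\floor{n/k} \geq \tfrac{1}{2}\log(n/k)$ for $n/k \geq 3$. For the remaining range $n/k < 3$, we have $\log(n/k) = O(1)$, so the target bound reduces to $m = \Omega(k)$; this follows from the standard fact that any $C$-approximate $\ell_1/\ell_1$ sparse recovery scheme must have $m \geq k$, since otherwise two distinct $k$-sparse vectors would collide under $A$ and the algorithm could not return an $\hat{x}$ with zero error on $k$-sparse inputs. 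Putting the two regimes together gives $m = \Omega(k\log(n/k))$ in all cases, which is what the corollary asserts.
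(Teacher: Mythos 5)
The paper states this corollary without proof, so there is no "paper proof" to compare against line by line; but your argument is the natural way to derive it, and it is correct. Your computation $H_q(1/2)=\log_q\bigl(2\sqrt{q-1}\bigr)$ is right, the observation that $(q-2)^2>0$ gives $q>2\sqrt{q-1}$ for $q\neq 2$ (hence $1-H_q(1/2)>0$ for $q\geq 3$) is the key point, and you correctly flag that the Gilbert--Varshamov hypothesis $\epsilon<1-1/q$ forces $q\geq3$ when $\epsilon=1/2$. Two small things worth tightening if you wanted a fully rigorous write-up: (i) positivity of a continuous function on $[3,\infty)$ plus a positive limit at infinity is what gives a uniform lower bound $c_0$ --- your phrasing (``strictly less than $1$'' plus ``tends to $1/2$'') doesn't quite say this, so spell out the continuity/compactness step; (ii) the phrase ``$C$ bounded away from zero'' in the corollary is really being used as ``$C$ is an $O(1)$ constant'' (it is the \emph{upper} bound on $C$ that keeps $\log(4+2C)$ bounded, and $C\geq 1$ automatically anyway), and your proof implicitly uses this. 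Your separate treatment of the regime $n/k<3$ via injectivity of $A$ on $k$-sparse vectors (forcing $m\geq k$) is a clean way to close the gap that the theorem's bound leaves there.
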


\section{Randomized Upper Bound for Uniform Noise}\label{upper-bound}

The standard way to prove a randomized lower bound is to find a
distribution of hard inputs, and to show that any deterministic
algorithm is likely to fail on that distribution.  In our context, we
would like to define a ``head'' random variable $y$ from a distribution $Y$  and a
``tail'' random variable $z$ from a distribution $Z$, such that any algorithm given the sketch of $y+z$ must recover an incorrect $y$ with non-negligible probability.

Using our deterministic bound as inspiration, we could take $Y$ to be
uniform over a set of $k$-sparse binary vectors of minimum Hamming
distance $k$ and $Z$ to be uniform over the ball $B_1(\gamma k)$ for some constant $\gamma>0$.  Unfortunately, as the following theorem shows,
one can actually perform a recovery of such vectors using only $O(k)$ measurements; this is because $\norm{2}{z}$ is very small (namely, $\tilde{O}(k / \sqrt{n})$) with
high probability.

%

\begin{theorem}
\label{thm:unifub}
	Let $Y \subset \R^n$ be a set of signals with the property that for every distinct
	$y_1,y_2\in Y$, $\|y_1-y_2\|_2 \ge r$, for some parameter $r>0$.
	Consider ``noisy signals'' $x=y+z$, where $y \in Y$ and $z$
	is a  ``noise vector'' chosen uniformly at random from $B_1(s)$, for another
	parameter $s>0$. Then using an $m\times n$ Gaussian measurement matrix
	$A = (1/\sqrt{m})(g_{ij})$, where $g_{ij}$'s are i.i.d. standard
	Gaussians, we can recover $y\in Y$ from $A(y+z)$ with probability
	$1-1/n$ (where the probability is over both $A$ and $z$), as long as
	\[
		s \le O\left(\frac{r m^{1/2} n^{1/2-1/m}}{|Y|^{1/m} \log^{3/2}n}\right).
	\]
\end{theorem}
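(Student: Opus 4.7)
My plan is to use the nearest-neighbor decoder $\hat y = \argmin_{y'\in Y} \norm{2}{A(x-y')}$ and argue it succeeds with the claimed probability. Writing $u = y-y'$ and expanding $\norm{2}{A(x-y')}^{2}-\norm{2}{A(x-y)}^{2}$, correct recovery of $y$ is equivalent to the event
\[
\norm{2}{Au}^{2}+2\,\langle Au,\,Az\rangle \;>\; 0 \quad\text{for every } y'\in Y\setminus\{y\}.
\]

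First I would handle $\norm{2}{Au}$. The Johnson--Lindenstrauss concentration for the Gaussian matrix $A$ gives $\norm{2}{Au}^{2}\ge \norm{2}{u}^{2}/2\ge r^{2}/2$ per pair with probability $1-e^{-\Omega(m)}$, and a union bound over the $\le |Y|$ pairs is cheap as long as $m$ is not too small compared with $\log|Y|$. Standard random-matrix estimates also give $\norm{\mathrm{op}}{A}\le O(\sqrt{n/m})$ with probability $1-e^{-\Omega(n)}$.

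The essential step is to control the cross term $\langle Au, Az\rangle$ without losing a factor of $\sqrt{m}$: a naive Cauchy--Schwarz bound $\norm{2}{Au}\,\norm{2}{Az}\approx r\cdot s/\sqrt{n}$ would only give $s\lesssim r\sqrt{n}$, far from the claim. Instead I would exploit that $z$ is mean-zero (sign symmetry of the uniform distribution on $B_{1}(s)$). Conditioning on $A$ and setting $v:=A^{T}Au$, the quantity $\langle Au,Az\rangle = v^{T}z$ is a centered linear form in $z$ with
\[
\Var_{z}[v^{T}z] \;=\; \E[z_{1}^{2}]\,\norm{2}{v}^{2} \;=\; \Theta(s^{2}/n^{2})\cdot \norm{2}{A^{T}Au}^{2},
\]
where $\E[z_{1}^{2}]=\Theta(s^{2}/n^{2})$ follows from the standard representation $z_{i}= s\,U^{1/n}\epsilon_{i}E_{i}/\sum_{j}E_{j}$ of uniform samples from $B_{1}(s)$ via independent exponentials $E_{j}$, signs $\epsilon_{i}$, and a radius $U\sim\mathrm{Unif}[0,1]$. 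Combined with $\norm{2}{A^{T}Au}\le \norm{\mathrm{op}}{A}\norm{2}{Au}\le O(\sqrt{n/m})\,\norm{2}{u}$, this yields a standard deviation of $O(s\,\norm{2}{u}/\sqrt{mn})$, a $\sqrt{m/n}$ improvement over Cauchy--Schwarz, and is precisely what produces the $\sqrt{mn}$ in the theorem.

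The last step is to pass from a variance bound to a tail bound strong enough for a union bound over $|Y|$ competitors. I would estimate $\E_{z}[|v^{T}z|^{p}]^{1/p}$ by applying a Bernstein-type inequality to the signed-exponential form of $v^{T}z$, and then use Markov's inequality. Choosing $p\asymp m$ rather than a constant is what makes the $(n|Y|)^{1/m}$ factor appear and correctly interpolates between the sub-Gaussian regime ($m \gtrsim \log(n|Y|)$) and the sub-exponential regime where $m$ is small; the residual $\log^{3/2} n$ absorbs the polylog overhead coming from the JL union bound, the operator-norm event, and the deviation of $\norm{1}{z}$ from $s$. Requiring the resulting tail to be below $\norm{2}{Au}^{2}/2 \asymp r^{2}/2$ and solving for $s$ gives exactly the stated bound. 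The main obstacle I expect is calibrating that moment estimate so it yields $(n|Y|)^{1/m}$ rather than a plain $\sqrt{\log(n|Y|)}$; everything else is fairly standard Gaussian concentration and random-matrix bookkeeping.
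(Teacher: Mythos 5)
Your decoder (closest image in $\ell_2$) is the same as the paper's, but the analysis is structured differently, and the structure is where it breaks. The paper never touches the cross term: it bounds the single scalar $\norm{2}{Az}$ once (second part of Lemma~\ref{lem:l2pres} plus Lemma~\ref{lem:l2small}), and handles the competitors entirely on the signal side, lower-bounding every pairwise image distance via the small-ball estimate $\Pr[\norm{2}{Aw}\le\norm{2}{w}/D]\le(3/D)^m$ with $\delta=1/(2n\abs{Y})$, i.e.\ it \emph{allows} the image distances to shrink to $r/(3(2n\abs{Y})^{1/m})$; that shrinkage is the sole source of the $\abs{Y}^{1/m}$ in the statement. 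Your first step instead insists $\norm{2}{Au}^2\ge\norm{2}{u}^2/2$ for all competitors via JL plus a union bound, which costs $\abs{Y}e^{-\Omega(m)}$ and is affordable only when $m\gtrsim\log(n\abs{Y})$. The theorem has no such hypothesis, and the regime it exists for---the corollary defeating the deterministic hard instance with $m=O(k)$ while $\log\abs{Y}=\Theta(k\log(n/k))$---is exactly $m\ll\log\abs{Y}$. There you cannot keep all $\abs{Y}$ image distances at $\Omega(r)$; shrinkage by roughly $(n\abs{Y})^{1/m}$ must be tolerated, which is precisely what Lemma~\ref{lem:l2pres} provides. So the step ``a union bound over the $\le\abs{Y}$ pairs is cheap'' fails exactly where the theorem is interesting, and your plan locates the $(n\abs{Y})^{1/m}$ factor in the wrong place.

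The second gap is quantitative: the moment argument does not ``give exactly the stated bound.'' With $p\asymp m$, Markov forces the threshold to exceed $(\E\abs{v^Tz}^p)^{1/p}(n\abs{Y})^{1/m}$, and for a centered linear form in the coordinates of $z$ the $L^p$ norm grows like $\sqrt{p}\,\sigma$ plus a subexponential term of order $p\,(s/n)\norm{\infty}{v}$; since $\sigma\approx s\norm{2}{u}/\sqrt{mn}$, the factor $\sqrt{p}\asymp\sqrt{m}$ cancels exactly the $\sqrt{m}$ you credit to the variance computation, and comparing with $\norm{2}{Au}^2\approx\norm{2}{u}^2\ge r\norm{2}{u}$ yields $s\lesssim r\sqrt{n}/(n\abs{Y})^{1/m}$, not $r\sqrt{mn}/((n\abs{Y})^{1/m}\log^{3/2}n)$. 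If you instead repair the first gap by letting $\norm{2}{Au}$ shrink by $(n\abs{Y})^{1/m}$ as the paper does, the right-hand side $\norm{2}{Au}^2$ shrinks quadratically while the cross term shrinks only linearly, and you still need the union bound over $Y$ on the noise side, so you lose $(n\abs{Y})^{2/m}$ overall. The structural point is that the cross term depends on the competitor, which drags the $\abs{Y}$-union bound onto the noise; the paper's seemingly lossy triangle-inequality step is what decouples the single noise event from the competitor set, and that decoupling plus the small-ball bound is what makes the $m\ll\log\abs{Y}$ regime go through with a single $(n\abs{Y})^{1/m}$ loss. Your instinct that the cross term beats Cauchy--Schwarz is sound---indeed it is what one would need to justify the claimed threshold when $m\gg\log n$, where Equation~\eqref{eq:noise} silently relies on $m=O(\log n)$---but as written your proposal neither covers the small-$m$ regime nor reproduces the stated bound.
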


To prove the theorem we will need the following two lemmas.

\begin{lemma}
\label{lem:l2pres}
	For any $\delta>0$, $y_1,y_2\in Y$, $y_1\not=y_2$, and $z\in \R^n$, each of
	the following holds with probability at least $1-\delta$:
	\begin{itemize}
		\item $\|A(y_1-y_2)\|_2 \ge \frac{\delta^{1/m}}{3}\|y_1-y_2\|_2$, and
		\item $\|Az\|_2 \le (\sqrt{(8/m)\log(1/\delta)}+1)\|z\|_2$.
	\end{itemize}
\end{lemma}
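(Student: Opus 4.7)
The plan is to reduce both parts to standard tail bounds for the chi-squared distribution. Since $A = (1/\sqrt m)(g_{ij})$ with i.i.d.\ standard Gaussian entries, for any fixed vector $v\in\R^n$ the coordinates of $Av$ are i.i.d.\ $\cN(0,\|v\|_2^2/m)$, hence the rescaled squared norm $m\|Av\|_2^2/\|v\|_2^2$ is distributed as $X\sim\chi^2_m$. Each bullet thus becomes a one-line statement about the upper or lower tail of $X$.

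For the first bullet, take $v=y_1-y_2$. The claim is equivalent to $X\ge m\delta^{2/m}/9$ with probability at least $1-\delta$, so I want a small-ball bound on $\chi^2_m$. Using the standard Cramér--Chernoff argument with the chi-squared MGF $\E[e^{-\lambda X}]=(1+2\lambda)^{-m/2}$ and optimizing $\lambda$ yields
\[
\Pr[X\le t]\;\le\;(t/m)^{m/2}\,e^{(m-t)/2}\qquad(0<t\le m).
\]
Plugging in $t=m\delta^{2/m}/9$ gives
\[
\Pr[X\le t]\;\le\;\bigl(\delta^{2/m}/9\bigr)^{m/2}e^{m/2}\;=\;\delta\cdot(e/9)^{m/2}\;\le\;\delta,
\]
since $e<9$. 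Rescaling back to $\|Av\|_2^2\ge(\delta^{2/m}/9)\|v\|_2^2$ and taking square roots gives the first bullet.

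For the second bullet, take $v=z$, so I want an upper-tail bound for $X$. I would invoke the Laurent--Massart inequality
\[
\Pr\!\bigl[X\ge m+2\sqrt{mt}+2t\bigr]\;\le\;e^{-t},
\]
with $t=\log(1/\delta)$. Using $\sqrt{a+b+c}\le\sqrt a+\sqrt b+\sqrt c$ (or the direct check $m+2\sqrt{mt}+2t\le(\sqrt m+\sqrt{8t})^2$), this gives $\sqrt X\le\sqrt m+\sqrt{8\log(1/\delta)}$ with probability $\ge1-\delta$; dividing by $\sqrt m$ and multiplying by $\|z\|_2$ yields exactly the stated bound.

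Neither step is hard; the only thing to watch is getting the constants to line up. The mildly delicate point is the lower tail: the factor $1/3$ inside $\delta^{1/m}/3$ is what makes $(e/9)^{m/2}\le1$, so a smaller constant would break the single-line argument. Once that constant is chosen, both bullets fall out of one-sided chi-squared concentration.
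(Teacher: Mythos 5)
Your proof is correct and takes essentially the same route as the paper: the paper simply cites the two standard Gaussian concentration bounds $\Pr[\|A(y_1-y_2)\|_2 \le \|y_1-y_2\|_2/D] \le (3/D)^m$ and $\Pr[\|Az\|_2 \ge D\|z\|_2] \le e^{-m(D-1)^2/8}$ (attributed to ``standard arguments'' in a reference) and sets their right-hand sides to $\delta$. Your chi-squared Chernoff computation for the lower tail and the Laurent--Massart bound for the upper tail are precisely the derivations behind those cited inequalities, so your write-up is a correct, more self-contained version of the same argument, and the constants ($\sqrt{e}<3$, and $m+2\sqrt{mt}+2t\le(\sqrt m+\sqrt{8t})^2$) check out.
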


See the appendix for the proof.

\begin{lemma}\label{lem:l2small}
	A random vector $z$ chosen uniformly from $B_1(s)$ satisfies
	\[
		\Pr[\|z\|_2 > \alpha s\log n/\sqrt{n}] < 1/n^{\alpha-1}.
	\]
\end{lemma}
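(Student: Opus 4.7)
The plan is to reduce to $s=1$ by scaling and then bound $\|z\|_2$ via a union bound over coordinates. By homogeneity, if $z'$ is uniform on $B_1^n(1)$ then $z = sz'$ is uniform on $B_1^n(s)$ and $\|z\|_2 = s\|z'\|_2$, so it suffices to show
\[
\Pr\!\left[\|z'\|_2 > \alpha\,\frac{\log n}{\sqrt{n}}\right] < \frac{1}{n^{\alpha-1}}.
\]

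First I would compute the marginal density of one coordinate $z'_1$. Using $\mathrm{vol}(B_1^n(r)) = (2r)^n/n!$, conditioning on $z'_1 = t$ leaves the other coordinates uniform on $B_1^{n-1}(1-|t|)$, so the density of $z'_1$ at $t \in [-1,1]$ is $\mathrm{vol}(B_1^{n-1}(1-|t|))/\mathrm{vol}(B_1^n(1)) = \tfrac{n}{2}(1-|t|)^{n-1}$. Integrating gives the clean tail bound
\[
\Pr[\,|z'_i| > t\,] = (1-t)^n \leq e^{-nt}
\]
for every coordinate $i$. A union bound over the $n$ coordinates with $t = \alpha(\log n)/n$ then yields
\[
\Pr\!\left[\max_i |z'_i| > \frac{\alpha\log n}{n}\right] \leq n\cdot e^{-\alpha\log n} = n^{1-\alpha}.
\]

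On the complementary event, I would invoke the standard interpolation $\|z'\|_2^2 \leq \|z'\|_\infty \cdot \|z'\|_1 \leq (\alpha\log n/n)\cdot 1$, obtaining $\|z'\|_2 \leq \sqrt{\alpha(\log n)/n}$. Since $\sqrt{\alpha\log n} \leq \alpha\log n$ whenever $\alpha\log n \geq 1$ (outside of this regime the stated probability bound $1/n^{\alpha-1}$ is at least $1$, so there is nothing to prove), this gives $\|z'\|_2 \leq \alpha(\log n)/\sqrt{n}$. Scaling by $s$ finishes the argument.

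I do not see a genuine obstacle here; each step is a routine computation. The only thing worth flagging is that $\|z'\|_\infty\cdot\|z'\|_1$ already delivers a strictly stronger $\ell_2$ bound than the lemma requests, with a $\sqrt{\log n}$ to spare, so the single-coordinate tail $(1-t)^n$ is really what drives the estimate and there is comfortable slack throughout.
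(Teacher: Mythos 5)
Your proof follows the paper's essentially line for line: compute the marginal density of a single coordinate, obtain the tail $\Pr[|z'_i|>t]=(1-t)^n$, and union-bound over the $n$ coordinates to control $\|z'\|_\infty$. The only variation is the final step---you pass from $\|z'\|_\infty$ to $\|z'\|_2$ via the interpolation $\|z'\|_2^2\le\|z'\|_\infty\|z'\|_1$ rather than the paper's cruder $\|z'\|_2\le\sqrt{n}\,\|z'\|_\infty$, which yields a slightly sharper conclusion (by a factor of $\sqrt{\alpha\log n}$) at the cost of having to dispatch the regime $\alpha\log n<1$, which you correctly note is vacuous.
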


See the appendix for the proof.\\

\noindent{\it Proof of theorem.}\;\;
	In words, Lemma \ref{lem:l2pres} says that $A$ cannot bring faraway signal points
	too close together, and cannot blow up a small noise vector too much. Now, we
	already assumed the signals to be far apart, and Lemma \ref{lem:l2small} tells us
	that the noise is indeed small (in $\ell_2$ distance). The result is that in the
	image space, the noise is not enough to confuse different signals.
	Quantitatively, applying the second part of Lemma \ref{lem:l2pres} with
	$\delta = 1/n^2$, and Lemma \ref{lem:l2small} with $\alpha=3$, gives us
	\begin{equation}\label{eq:noise}
		\|Az\|_2
			\le O\left(\frac{\log^{1/2}n}{m^{1/2}}\right)\|z\|_2
			\le O\left(\frac{s\log^{3/2}n}{(m n)^{1/2}}\right)
	\end{equation}
	with probability $\ge 1-2/n^2$. On the other hand, given signal $y_1\in Y$, we know
	that every other signal $y_2\in Y$ satisfies $\|y_1-y_2\|_2 \ge r$, so by the
	first part of Lemma \ref{lem:l2pres} with $\delta = 1/(2n|Y|)$, together with a
	union bound over every $y_2\in Y$,
	\begin{equation}\label{eq:sigs}
		\|A(y_1-y_2)\|_2
			\ge \frac{\|y_1-y_2\|_2}{3(2n|Y|)^{1/m}}
			\ge \frac{r}{3(2n|Y|)^{1/m}}
	\end{equation}
	holds for every $y_2\in Y$, $y_2\not=y_1$, simultaneously with probability
	$1-1/(2n)$.

	Finally, observe that as long as $\|Az\|_2 < \|A(y_1-y_2)\|_2/2$ for every
	competing signal $y_2\in Y$, we are guaranteed that
	\begin{eqnarray*}
		\|A(y_1+z) - Ay_1\|_2
			&=& \|Az\|_2\\
			&<& \|A(y_1-y_2)\|_2 - \|Az\|_2\\
			&\le& \|A(y_1+z) - Ay_2\|_2
	\end{eqnarray*}
	for every $y_2\not=y_1$, so we can recover $y_1$ by simply returning the signal
	whose image is closest to our measurement point $A(y_1+z)$ in $\ell_2$ distance.
	To achieve this, we can chain Equations (\ref{eq:noise}) and (\ref{eq:sigs})
	together (with a factor of 2), to see that
	\[
		s \le O\left(\frac{r m^{1/2} n^{1/2-1/m}}{|Y|^{1/m} \log^{3/2}n}\right)
	\]
	suffices. Our total probability of failure is at most $2/n^2 + 1/(2n) < 1/n$.
\\

The main consequence of this theorem is that for the setup we used in Section
\ref{sec:detlb} to prove a deterministic lower bound of $\Omega(k\log(n/k))$,
if we simply draw the noise uniformly randomly from the same $\ell_1$ ball (in fact,
even one with a much larger radius, namely, polynomial in $n$), this ``hard
distribution'' can be defeated with just $O(k)$ measurements:

\begin{corollary}
	If $Y$ is a set of binary $k$-sparse vectors, as in Section~\ref{sec:detlb}, and noise
	$z$ is drawn uniformly at random from $B_1(s)$, then for any constant $\eps>0$, $m = O(k/\eps)$
	measurements suffice to recover any signal in $Y$ with probability $1-1/n$, as long as
	\[
		s \le O\left(\frac{k^{3/2+\eps} n^{1/2-\eps}}{\log^{3/2}n}\right).
	\]
\end{corollary}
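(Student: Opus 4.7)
The plan is to apply Theorem~\ref{thm:unifub} directly, instantiating it with the specific hard distribution from Section~\ref{sec:detlb} and choosing the number of measurements to be $m = Ck/\epsilon$ for a sufficiently large constant $C$ depending only on~$\epsilon$.

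First, I would compute the parameters that Theorem~\ref{thm:unifub} requires. Since $Y$ consists of $k$-sparse binary vectors whose pairwise Hamming distance is at least $k$, any two distinct elements $y_1, y_2 \in Y$ satisfy $\norm{2}{y_1 - y_2} \ge \sqrt{k}$, so we may take $r = \sqrt{k}$. Moreover, a trivial size bound gives $\abs{Y} \le \binom{n}{k} \le (en/k)^{k}$, which is all we need (the sharper Gilbert--Varshamov lower bound is not required for the upper-bound statement). With $m = Ck/\epsilon$, this yields
\[
\abs{Y}^{1/m} \le (en/k)^{k/m} = (en/k)^{\epsilon/C}.
\]

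Second, I would substitute $r = \sqrt{k}$, $m = Ck/\epsilon$, and this bound on $\abs{Y}^{1/m}$ into the conclusion of Theorem~\ref{thm:unifub}. The numerator becomes $\sqrt{k}\cdot\sqrt{Ck/\epsilon}\cdot n^{1/2 - \epsilon/(Ck)}$, which is $O(k/\sqrt{\epsilon})\cdot n^{1/2 - \epsilon/(Ck)}$, while the denominator contains $(en/k)^{\epsilon/C}\log^{3/2} n$. Combining the two $n$-exponents shows that the permissible $s$ is on the order of
\[
\frac{k \cdot n^{1/2}}{(en/k)^{\epsilon/C}\, n^{\epsilon/(Ck)}\,\log^{3/2} n}.
\]
By choosing $C$ large enough in terms of $\epsilon$, both $(en/k)^{\epsilon/C}$ and $n^{\epsilon/(Ck)}$ can be absorbed into $n^{\epsilon}\cdot k^{-\epsilon}$ (up to a constant), and the statement follows.

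The entire argument is essentially bookkeeping once Theorem~\ref{thm:unifub} is in hand, so the only real ``obstacle'' is matching exponents cleanly. The slightly delicate point is that the theorem's bound contains two places where $m$ appears in an exponent (the $n^{1/2-1/m}$ factor and the $\abs{Y}^{1/m}$ factor), and both must simultaneously be made $\le n^{\epsilon}$; this is what forces the dependence $m = \Theta(k/\epsilon)$ rather than, say, $m = O(k)$ alone. Once $C = C(\epsilon)$ is chosen accordingly, Theorem~\ref{thm:unifub} gives recovery with probability $1 - 1/n$, which is exactly the conclusion of the corollary.
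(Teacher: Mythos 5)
Your route is the same as the paper's: instantiate Theorem~\ref{thm:unifub} with $\abs{Y}\le\binom{n}{k}\le(en/k)^k$ and $m=\Theta(k/\eps)$, then do bookkeeping. The gap is in the final step. With your choice $r=\sqrt{k}$, the instantiated bound is
\[
s\le O\!\left(\frac{\sqrt{k}\,\sqrt{Ck/\eps}\;n^{1/2-\eps/(Ck)}}{(en/k)^{\eps/C}\log^{3/2}n}\right),
\]
and absorbing $(en/k)^{\eps/C}n^{\eps/(Ck)}$ into $O(n^{\eps}k^{-\eps})$ (which does work for, say, $C\ge 3$) yields a permissible noise level of $s\le O\bigl(k^{1+\eps}n^{1/2-\eps}/\log^{3/2}n\bigr)$: the leading factor is $k^{1+\eps}$, not the claimed $k^{3/2+\eps}$. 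Since the corollary asserts recovery for all $s$ up to a threshold larger by a factor of $\sqrt{k}$, your concluding ``the statement follows'' does not hold as written; for non-constant $k$ you have proved a strictly weaker statement.

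The missing $\sqrt{k}$ is exactly the difference between your $r=\sqrt{k}$ and the paper's proof, which plugs in $r=k$ (together with $m=(k+1)/\eps$), giving $s\le O\bigl(k^{3/2+k/m}n^{1/2-(k+1)/m}/\log^{3/2}n\bigr)$ and hence the stated exponent. Note, however, that $k$ is the minimum Hamming ($\ell_1$) distance of $Y$, whereas two binary vectors at Hamming distance $k$ are only $\sqrt{k}$ apart in $\ell_2$, which is the separation Theorem~\ref{thm:unifub} actually hypothesizes; so your instantiation is the one consistent with the theorem as stated. The upshot: either one must justify applying the theorem with the $\ell_1$ separation (its hypothesis does not support this), or the corollary's exponent should be read as $1+\eps$. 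In a careful write-up you should have flagged this mismatch explicitly rather than asserting that the stated $k^{3/2+\eps}$ bound follows from your computation, which it does not.
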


\begin{proof}
	The parameters in this case are $r=k$ and $|Y| \le \binom{n}{k} \le (ne/k)^k$, so by Theorem
	\ref{thm:unifub}, it suffices to have
	\[
		s \le O\left(\frac{k^{3/2+k/m} n^{1/2-(k+1)/m}}{\log^{3/2}n}\right).
	\]
	Choosing $m = (k+1)/\eps$ yields the corollary.
\end{proof}

\section{Randomized Lower Bound}

Although it is possible to partially circumvent this obstacle by focusing our noise
distribution on ``high'' $\ell_2$ norm, sparse vectors, we are able to obtain
stronger results via a reduction from a communication game and the corresponding
lower bound.

The communication game will show that a message $Ax$ must have a large
number of bits.  To show that this implies a lower bound on the number
of rows of $A$, we will need $A$ to be discrete.  Hence we first show
that discretizing $A$ does not change its recovery characteristics by
much.

\subsection{Discretizing Matrices}
Before we discretize by rounding, we need to ensure that the matrix is
well conditioned.  We show that without loss of generality, the rows
of $A$ are orthonormal.

We can multiply $A$ on the left by any invertible matrix to get
another measurement matrix with the same recovery characteristics.  If
we consider the singular value decomposition $A = U\Sigma V^*$, where
$U$ and $V$ are orthonormal and $\Sigma$ is 0 off the diagonal, this
means that we can eliminate $U$ and make the entries of $\Sigma$ be
either $0$ or $1$.  The result is a matrix consisting of $m$
orthonormal rows.  For such matrices, we prove the following:

\begin{lemma}\label{thm:discretizing}
  Consider any $m\times n$ matrix $A$ with orthonormal rows.  Let $A'$
  be the result of rounding $A$ to $b$ bits per entry.  Then for any
  $v \in \mathbb{R}^n$ there exists an $s \in \mathbb{R}^n$ with $A'v
  = A(v - s)$ and $\norm{1}{s} < n^22^{-b}\norm{1}{v}$.
\end{lemma}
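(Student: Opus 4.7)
The plan is to let $E = A - A'$ denote the entrywise rounding error matrix, so that $\lvert E_{ij}\rvert \le 2^{-b}$ for every $i,j$, and to construct the required $s$ explicitly from $E$. The identity we want, $A'v = A(v-s)$, rearranges to $As = Ev$. Since $A$ has orthonormal rows we have $AA^T = I_m$, which makes $s := A^T E v$ a natural candidate: then $As = AA^T(Ev) = Ev$, so $A(v-s) = Av - Ev = A'v$, which takes care of the equation.

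All that remains is the $\ell_1$ bound on $s$. I would proceed entrywise and pass through an $\ell_\infty$/$\ell_1$ estimate. Writing $w = Ev \in \mathbb{R}^m$, we have
\[
\lvert s_i\rvert = \Bigl\lvert \sum_{j=1}^m A_{ji} w_j \Bigr\rvert \le \max_{j}\lvert A_{ji}\rvert \cdot \norm{1}{w}.
\]
Because the rows of $A$ are unit vectors in $\mathbb{R}^n$, every entry $A_{ji}$ has magnitude at most $1$, so $\lvert s_i\rvert \le \norm{1}{Ev}$ for each $i$, giving $\norm{1}{s} \le n\,\norm{1}{Ev}$. For the inner factor, I would expand and use the rounding bound on $E$:
\[
\norm{1}{Ev} \le \sum_{i=1}^m \sum_{j=1}^n \lvert E_{ij}\rvert \lvert v_j\rvert \le m\cdot 2^{-b}\,\norm{1}{v}.
\]
Combining the two and using $m \le n$ yields $\norm{1}{s} \le mn\,2^{-b}\norm{1}{v} \le n^2 2^{-b}\norm{1}{v}$, which is exactly what is claimed (with a factor to spare, since the inequality is strict once $m<n$ or once one uses $\lvert E_{ij}\rvert < 2^{-b}$ for a standard rounding convention).

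There is no genuinely hard step here; the proof is really just choosing $s$ via the pseudoinverse $A^T$ and tracking constants. The only thing to be careful about is that the two distinct consequences of "orthonormal rows" are both needed: the relation $AA^T = I_m$ is what lets us solve $As = Ev$ by $s = A^T E v$, and the fact that each row having unit $\ell_2$ norm implies $\lvert A_{ji}\rvert \le 1$ is what controls the $\ell_1 \to \ell_1$ blow-up of $A^T$ by a factor of $n$ rather than something worse.
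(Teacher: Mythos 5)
Your proposal is correct and follows essentially the same route as the paper: the same choice $s = A^{T}(A-A')v$, the same use of $AA^{T}=I_m$ to get $As=(A-A')v$, and the same two-step $\ell_1$ estimate, with the only difference being that you bound the entries of $A^{T}$ by $1$ (giving a factor $n$) where the paper bounds the $\ell_1$ norm of each column of $A^{T}$ by $\sqrt{n}$ -- both comfortably within the stated $n^2 2^{-b}$ bound.
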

\begin{proof}
  Let $A'' = A - A'$ be the roundoff error when discretizing $A$ to
  $b$ bits, so each entry of $A''$ is less than $2^{-b}$.  Then for
  any $v$ and $s = A^TA''v$, we have $As = A''v$ and
  \begin{align*}
    \norm{1}{s} &= \norm{1}{A^TA''v} \leq \sqrt{n}\norm{1}{A''v}\\ &\leq
    m\sqrt{n}2^{-b}\norm{1}{v} \leq n^22^{-b}\norm{1}{v}.
  \end{align*}
\end{proof}

\subsection{Communication Complexity}

We use a few definitions and results from two-party communication complexity. For further background see the book by Kushilevitz and Nisan \cite{kn97}. Consider the following communication game. There are two parties, Alice and Bob. Alice is given a string $y \in \{0,1\}^d$. Bob is given an index $i \in [d]$, together with $y_{i+1}, y_{i+2}, \ldots, y_d$. The parties also share an arbitrarily long common random string $r$. Alice sends a single message $M(y,r)$ to Bob, who must output $y_i$ with probability at least $2/3$, where the probability is taken over $r$. We refer to this problem as {\sf Augmented Indexing}. The communication cost of {\sf Augmented Indexing} is the minimum, over all correct protocols, of the length of the message $M(y,r)$ on the worst-case choice of $r$ and $y$.

The next theorem is well-known and follows from Lemma 13 of \cite{MNSW98} (see also Lemma 2 of \cite{BJKK04}). 
\begin{theorem}\label{thm:AIND}
The communication cost of {\sf Augmented Indexing} is $\Omega(d)$.
\end{theorem}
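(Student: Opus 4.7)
The plan is to prove this standard lower bound by an information-theoretic argument, with the main delicacy being the correct handling of the shared randomness. Fix any public-coin protocol in which Alice sends a message $M = M(y, r)$ of worst-case length at most $L$, and from which Bob produces a guess $\hat y_i = \hat y_i(i, y_{i+1}, \ldots, y_d, M, r)$ satisfying $\Pr_r[\hat y_i = y_i] \ge 2/3$ for every fixed $(y, i)$. Draw $y$ uniformly from $\{0,1\}^d$ independently of $r$; the goal is to show $L = \Omega(d)$.

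The key step is to treat the pair $(M, r)$ as a combined ``message'' and apply the chain rule for conditional entropy in reverse order:
\[
H(y \mid M, r) \;=\; \sum_{i=1}^{d} H(y_i \mid M, r, y_{i+1}, \ldots, y_d).
\]
For each $i$, $\hat y_i$ is a deterministic function of $(M, r, y_{i+1}, \ldots, y_d)$ that equals $y_i$ with probability at least $2/3$ under the joint distribution of $(y, r)$, so Fano's inequality yields $H(y_i \mid M, r, y_{i+1}, \ldots, y_d) \le h(1/3)$, where $h$ is the binary entropy function. Summing over $i$ gives $H(y \mid M, r) \le d\,h(1/3)$, and since $y$ is independent of $r$ we have $H(y \mid r) = H(y) = d$, whence
\[
I(y; M \mid r) \;=\; H(y \mid r) - H(y \mid M, r) \;\ge\; d\bigl(1 - h(1/3)\bigr) \;=\; \Omega(d).
\]
Averaging over $r$, there exists a fixing $r^\ast$ with $I(y; M(y, r^\ast)) \ge d(1 - h(1/3))$, so $H(M(y, r^\ast)) \ge \Omega(d)$, and by the source-coding inequality the expected length (over uniform $y$) of a prefix-free encoding of $M(y, r^\ast)$ is $\Omega(d)$, forcing the worst-case length $L$ to be $\Omega(d)$ as well.

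The main pitfall to avoid is trying to apply Fano directly to $H(y_i \mid M, y_{i+1}, \ldots, y_d)$ without conditioning on $r$: since $r$ is an extra independent variable, conditioning on it can only decrease entropy, so the naive move ``drop $r$'' goes in the wrong direction. Bundling $(M, r)$ into a joint message, bounding $I(y; M \mid r)$, and only then extracting a good fixing $r^\ast$ to recover a statement purely about $M$ is the clean workaround. Once this is set up, the remaining pieces---the chain rule, Fano's inequality, and $\E|M| \ge H(M)$---combine directly to give the $\Omega(d)$ bound, matching the statement of Lemma 13 of \cite{MNSW98}.
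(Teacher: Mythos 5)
Your proof is correct, but it takes a genuinely different route from the paper's. The paper proves the theorem by citation: it invokes Lemma~13 of~\cite{MNSW98} to get the $\Omega(d)$ lower bound for \emph{private-coin} protocols, then uses Newman's theorem to transfer the bound to the public-coin model at a cost of only $O(\log d)$. Your argument is instead a self-contained information-theoretic proof: you bundle $(M,r)$ together, apply the chain rule in decreasing-index order so that the conditioning $y_{i+1},\dots,y_d$ matches exactly what Bob sees, invoke Fano's inequality on each coordinate to get $H(y_i \mid M, r, y_{>i}) \le h(1/3)$, and then average over $r$ to extract a single fixing $r^\ast$ whose message has entropy $\Omega(d)$. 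This sidesteps Newman's theorem entirely --- there is no $O(\log d)$ loss --- and avoids the black-box appeal to~\cite{MNSW98}. What the paper's approach buys is brevity (two citations and done); what yours buys is self-containment and a slightly tighter statement, at the cost of needing to get the conditioning on $r$ exactly right (which you correctly flag as the main pitfall: since $\hat y_i$ depends on $r$, Fano must be applied to $H(y_i \mid M, r, y_{>i})$, not $H(y_i \mid M, y_{>i})$). One minor simplification: rather than invoking the source-coding inequality at the end, you can just observe that $M(y, r^\ast)$ ranges over strings of length at most $L$, a set of size $< 2^{L+1}$, so $L + 1 > H(M(y, r^\ast)) \ge d(1 - h(1/3))$ directly.
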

\begin{proof}
First, consider the private-coin version of the problem, in which both parties can toss coins, but do not share a random string $r$ (i.e., there is no public coin). Consider any correct protocol for this problem. We can assume the probability of error of the protocol is an arbitrarily small positive constant by increasing the length of Alice's message by a constant factor (e.g., by independent repetition and a majority vote). Applying Lemma 13 of \cite{MNSW98} (with, in their notation, $t = 1$ and $a = c' \cdot d$ for a sufficiently small constant $c' > 0$), the communication cost of such a protocol must be $\Omega(d)$. Indeed, otherwise there would be a protocol in which Bob could output $y_i$ with probability greater than $1/2$ without any interaction with Alice, contradicting that $\Pr[y_i = 1/2]$ and that Bob has no information about $y_i$. Our theorem now follows from Newman's theorem (see, e.g., Theorem 2.4 of \cite{knr99}), which shows that the communication cost of the best public coin protocol is at least that of the private coin protocol minus $O(\log d)$ (which also holds for one-round protocols). 
\end{proof}


\subsection{Randomized Lower Bound Theorem}

\begin{theorem}\label{thm:discrete-randomized}
For any randomized $\ell_1/\ell_1$ recovery algorithm
$(A,\mathscr{A})$, with approximation factor $C = O(1)$, $A$ must have
$m = \Omega(k\log(n/k))$ rows.
\end{theorem}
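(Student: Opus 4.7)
The plan is to reduce from the \textsf{Augmented Indexing} problem of Theorem~\ref{thm:AIND}, with $d = \Theta(k \log(n/k) \cdot \log n)$. If the randomized recovery scheme uses $m$ rows, Alice will transmit a discretized sketch of size $O(m \log n)$ bits that lets Bob solve \textsf{Augmented Indexing}; combined with the $\Omega(d)$ communication lower bound, this forces $m = \Omega(k \log(n/k))$.

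More concretely, I would fix an error-correcting code $Y \subset \{0,1\}^n$ of $k$-sparse binary vectors with minimum Hamming distance $\Omega(k)$ from Lemma~\ref{g-v}, so $\log \abs{Y} = \Omega(k \log(n/k))$; pick a constant $D$ large relative to $C$; and break Alice's $d$-bit string $y$ into $\log n$ blocks of $\log \abs{Y}$ bits, encoding block $j$ as $x_j \in Y$. Alice then forms
\[
x \;=\; \sum_{j=1}^{\log n} D^{j}\, x_j.
\]
Using the public random string, both parties sample $A$ from the distribution of the recovery scheme; we may assume (as in Section~5.1) that $A$ has orthonormal rows, and Alice instead transmits $A'x$, where $A'$ is the entrywise rounding of $A$ to $b = \Theta(\log n)$ bits. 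Since $\|x\|_1 = \poly(n)$, each coordinate of $A'x$ fits in $O(\log n)$ bits, so the message has size $O(m \log n)$. Given $i \in [d]$ and $y_{i+1},\dots,y_d$, Bob identifies the block $j^*$ containing $y_i$, reconstructs $x_{j^*+1},\dots,x_{\log n}$, subtracts $A'\sum_{j > j^*} D^{j} x_j$ from Alice's message, and is left (up to discretization error) with $A'\tilde{x}$, where $\tilde{x} = \sum_{j \le j^*} D^{j} x_j$. Because the $\ell_1$ mass of the lower blocks is at most $\frac{k D^{j^*}}{D-1}$, an $\ell_1/\ell_1$ approximation $\hat{x}$ of $\tilde{x}$ satisfies $\|\hat{x}/D^{j^*} - x_{j^*}\|_1 \le \frac{(C+1)k}{D-1}$, which by the minimum-distance property of $Y$ pins down $x_{j^*}$ uniquely once $D$ is a sufficiently large constant, thereby revealing $y_i$.

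The main technical obstacle is that Lemma~\ref{thm:discretizing} gives $A'\tilde{x} = A(\tilde{x} - s)$ for some $s$ with $\|s\|_1 \le n^{2}2^{-b}\|\tilde{x}\|_1 = n^{-\Omega(1)}\|\tilde{x}\|_1$, but $s$ depends on $A$, so the randomized recovery guarantee (which holds for any fixed input) does not directly apply to $\tilde{x} - s$. To fix this, Bob additionally samples (via public coins) a vector $u$ uniformly from $B_1(k)$ and runs the recovery algorithm on $A(\tilde{x}+u) - A'\sum_{j>j^*}D^j x_j$, i.e., effectively on $A(\tilde{x} + u - s)$. Since $\|s\|_1 \ll k/n$, the total variation distance between $u$ and $u - s$ (both viewed as perturbations inside $B_1(k)$) is $o(1)$, so the recovery algorithm succeeds with probability $3/4 - o(1) > 1/2$; meanwhile $\|u\|_1 \le k \ll D^{j^*}/(D-1)$, so $u$ contributes only lower-order mass to the $\ell_1$ tail and the block-identification argument above still goes through.

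Putting the pieces together, the $O(m \log n)$-bit message $A'x$, together with the public random string containing $A$ and $u$, constitutes a correct \textsf{Augmented Indexing} protocol with error $< 1/3$ (after a constant number of independent repetitions to amplify $1/2$ to $2/3$). Theorem~\ref{thm:AIND} then gives $m \log n = \Omega(k \log(n/k)\cdot \log n)$, hence $m = \Omega(k \log(n/k))$. The subtlest points will be (i) verifying that the parameter choices of $D$, $b$, and $\|u\|_1$ simultaneously keep $\|x\|_1$ polynomial, keep $\|s\|_1$ sub-$k/n$, and preserve enough head-to-tail gap for error correction, and (ii) handling the constant-factor amplification of the success probability in the communication reduction, both of which are routine once the framework above is in place.
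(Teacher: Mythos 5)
Your proposal is correct and follows essentially the same approach as the paper's proof: the same \textsf{Augmented Indexing} reduction with the geometric encoding $x = \sum_j D^j x_j$ over the Gilbert--Varshamov code, the same discretization of $A$ to $A'$ via Lemma~\ref{thm:discretizing}, and the same trick of adding a uniformly random $u \in B_1(k)$ to make the distribution of the recovered input statistically close to one independent of $A$. The only (harmless) differences are cosmetic: the paper fixes $D = 2C+3$ and observes directly that $3/4 - 1/n \ge 2/3$ for large $n$, rather than invoking amplification, and it folds the $\norm{1}{u} \le k$ contribution into the $i=0$ term of the geometric sum rather than calling it ``lower order.''
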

\begin{proof}
We shall assume, without loss of generality, that $n$ and $k$ are
powers of $2$, that $k$ divides $n$, and that the rows of $A$ are
orthonormal.  The proof for the general case follows with minor
modifications.

Let $(A,\mathscr{A})$ be such a recovery algorithm. We will show how
to solve the {\sf Augmented Indexing} problem on instances of size $d
= \Omega(k \log (n/k) \log n)$ with communication cost $O(m \log
n)$. The theorem will then follow by Theorem \ref{thm:AIND}.

Let $X$ be the maximal set of $k$-sparse $n$-dimensional binary
vectors with minimum Hamming distance $k$.  From Lemma~\ref{g-v} we
have $\log \abs{X} = \Omega(k \log (n/k))$.  Let $d = \floor{\log
  \abs{X}} \log n$, and define $D = 2C+3$.

Alice is given a string $y \in \{0,1\}^d$, and Bob is given $i \in
[d]$ together with $y_{i+1}, y_{i+2}, \ldots, y_d$, as in the setup
for {\sf Augmented Indexing}.

Alice splits her string $y$ into $\log n$ contiguous chunks $y^1, y^2,
\ldots, y^{\log n}$, each containing $\floor{\log \abs{X}}$ bits. She
uses $y^j$ as an index into $X$ to choose $x_j$.  Alice defines
\[
x = D^1 x_1 + D^2 x_2 + \cdots + D^{\log n} x_{\log n}.
\]
Alice and Bob use the common randomness $r$ to agree upon a random
matrix $A$ with orthonormal rows.  Both Alice and Bob round $A$ to form
$A'$ with $b = \ceil{(4+2\log D)\log n} = O(\log n)$ bits per entry.
Alice computes $A'x$ and transmits it to Bob.

From Bob's input $i$, he can compute the value $j=j(i)$ for which the
bit $y_i$ occurs in $y^{j}$. Bob's input also contains $y_{i+1},
\ldots, y_n$, from which he can reconstruct $x_{j+1}, \ldots, x_{\log
  n}$, and in particular can compute
\[
z = D^{j+1}x_{j+1} + D^{j+2}x_{j+2} + \cdots + D^{\log n}x_{\log n}.
\]
Set $w = x-z = \sum_{i=1}^j D^ix_i$.  Bob then computes $A'z$, and
using $A'x$ and linearity, $A'w$.  Then
\[
\norm{1}{w} \leq \sum_{i=1}^j kD^i < k\frac{D^{1 + j}}{D-1} < kD^{2\log n}.
\]
So from Lemma~\ref{thm:discretizing}, there exists some $s$ with $A'w
= A(w-s)$ and
\[
\norm{1}{s} < n^22^{-3\log n - 2\log D \log n}\norm{1}{w} < k/n^2.
\]
Bob chooses another vector $u$ uniformly from $B_1^n(k)$, the $\ell_1$
ball of radius $k$, and computes $A(w - s - u) = A'w - Au$.

Bob runs the estimation algorithm $\mathscr{A}$ on $A$ and $A(w-s-u)$,
obtaining $\hat{w}$.  We have that $u$ is independent of $w$ and $s$,
and that $\norm{1}{u} \leq k(1 - 1/n^2) \leq k - \norm{1}{s}$ with
probability $\frac{\text{Vol}(B_1^n(k(1 - 1/n^2)))}{\text{Vol}(B_1^n(k))} = (1 - 1/n^2)^n > 1 - 1/n$.  But $\{w - u \mid \norm{1}{u}
\leq k - \norm{1}{s}\} \subseteq \{w - s - u \mid \norm{1}{u} \leq
k\}$, so 
the ranges of the random variables $w - s - u$ and $w
- u$ overlap in at least a $1 - 1/n$ fraction of their
volumes. Therefore $w-s-u$ and $w-u$ have statistical distance at most
$1/n$.  The distribution of $w - u$ is independent of $A$, so running
the recovery algorithm on $A(w - u)$ would work with probability at
least $3/4$.  Hence with probability at least $3/4 - 1/n\ge 2/3$ (for $n$ large enough), $\hat{w}$
satisfies the recovery criterion for $w - u$, meaning
\[
\norm{1}{w-u-\hat{w}} \leq C \min_{k\mbox{-sparse } w'} \norm{1}{w-u-w'}.
\]
Now,
\begin{align*}
  \norm{1}{D^jx_j - \hat{w}} &\leq  \norm{1}{w-u - D^jx_j} +  \norm{1}{w-u - \hat{w}}\\
  &\leq (1 + C) \norm{1}{w-u - D^jx_j}\\
  &\leq (1 + C)(\norm{1}{u} + \sum_{i=1}^{j-1} \norm{1}{D^ix_i})\\
  &\leq (1 + C)k\sum_{i=0}^{j-1} D^{i}\\
  &< k \cdot \frac{(1+C)D^{j}}{D-1}\\
  &= kD^j/2.
\end{align*}
And since the minimum Hamming distance in $X$ is $k$, this means
$\norm{1}{D^jx_j - \hat{w}} < \norm{1}{D^jx' - \hat{w}}$ for all $x'
\in X, x' \neq x_j$\footnote{Note that these bounds would still hold
  with minor modification if we replaced the $\ell_1/\ell_1$
  guarantee with the $\ell_2/\ell_1$ guarantee, so the same result
  holds in that case.\label{l1l2}}.  So Bob can correctly identify
$x_j$ with probability at least $2/3$.  From $x_j$ he can recover
$y^j$, and hence the bit $y_i$ that occurs in $y^j$.

Hence, Bob solves {\sf Augmented Indexing} with probability at least
$2/3$ given the message $A'x$.  The entries in $A'$ and $x$ are
polynomially bounded integers (up to scaling of $A'$), and so each
entry of $A'x$ takes $O(\log n)$ bits to describe. Hence, the
communication cost of this protocol is $O(m \log n)$. By Theorem
\ref{thm:AIND}, $m \log n = \Omega(k \log(n/k) \log n)$, or $m =
\Omega(k \log (n/k))$.
\end{proof}

\bibliographystyle{alpha}
\bibliography{compressed}

\appendix

\section{ Proof of Lemma~\ref{g-v} }

\begin{proof}
  We will construct a codebook $T$ of block length $k$, alphabet $q$,
  and minimum Hamming distance $\epsilon k$.  Replacing each character
  $i$ with the $q$-long standard basis vector $e_i$ will create a binary
  $qk$-dimensional codebook $S$ with minimum Hamming distance $2
  \epsilon k$ of the same size as $T$, where each element of $S$ has
  exactly $k$ ones.

  The Gilbert-Varshamov bound, based on volumes of Hamming balls,
  states that a codebook of size $L$ exists for some
  \[
  L \geq \frac{q^k}{\sum_{i=0}^{\epsilon k - 1} \binom{k}{i} (q-1)^i}.
  \]
  Using the claim (analogous to~\cite{vL98}, p. 21, proven
  below) that for $\epsilon < 1 - 1/q$
  \[
  \sum_{i=0}^{\epsilon k}\binom{k}{i} (q-1)^i < q^{H_q(\epsilon)k},
  \]
  we have that $\log L > (1 - H_q(\epsilon))k \log q$, as desired.
\end{proof}

\begin{claim}
  For $0 < \epsilon < 1 - 1/q$,
  \[
  \sum_{i=0}^{\epsilon k}\binom{k}{i} (q-1)^i < q^{H_q(\epsilon)k}.
  \]
\end{claim}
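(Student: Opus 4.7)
The plan is to use the standard exponential tilting trick. For any $t \geq 1$ and any index $i \leq \epsilon k$, the factor $t^{\epsilon k - i}$ is at least $1$, so I can insert it into the sum without decreasing any term, then extend the summation to all $i \in \{0,1,\ldots,k\}$ and recognize a binomial expansion:
\[
\sum_{i=0}^{\epsilon k}\binom{k}{i}(q-1)^i \;\leq\; \sum_{i=0}^{k}\binom{k}{i}(q-1)^i\, t^{\epsilon k - i} \;=\; t^{\epsilon k}\left(1+\tfrac{q-1}{t}\right)^{k}.
\]

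Next I would optimize the choice of $t$. The natural pick is $t = \frac{(q-1)(1-\epsilon)}{\epsilon}$, which is strictly greater than $1$ exactly because of the hypothesis $\epsilon < 1 - 1/q$ (indeed, $t > 1 \iff (q-1)(1-\epsilon) > \epsilon \iff \epsilon < 1 - 1/q$). With this $t$, one gets $\tfrac{q-1}{t} = \tfrac{\epsilon}{1-\epsilon}$, so $1 + \tfrac{q-1}{t} = \tfrac{1}{1-\epsilon}$, and therefore
\[
t^{\epsilon k}\left(1+\tfrac{q-1}{t}\right)^{k} \;=\; \left(\tfrac{q-1}{\epsilon}\right)^{\epsilon k}(1-\epsilon)^{\epsilon k}(1-\epsilon)^{-k} \;=\; \left(\tfrac{q-1}{\epsilon}\right)^{\epsilon k}(1-\epsilon)^{-(1-\epsilon)k}.
\]
A direct expansion of the definition of $H_q$ shows that the right-hand side equals $q^{H_q(\epsilon)k}$, completing the bound.

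Finally, I would argue that the inequality is strict. Since $t > 1$ strictly under the hypothesis, the inserted factor $t^{\epsilon k - i}$ is strictly greater than $1$ whenever $i < \epsilon k$ (in particular at $i=0$, where $\binom{k}{0}(q-1)^0 = 1 > 0$), and extending the sum to $i \in \{\lfloor \epsilon k\rfloor+1,\ldots,k\}$ adds strictly positive terms (e.g., the $i=k$ term). Either source of slack suffices to upgrade $\leq$ to $<$.

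I do not anticipate a real obstacle here: the main thing to get right is that the choice of $t$ is admissible ($t \geq 1$) and that this matches precisely the hypothesis $\epsilon < 1 - 1/q$, which is what makes the tilting parameter lie in the correct range for the bound to be useful.
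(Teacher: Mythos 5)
Your proof is correct, and it is essentially the paper's argument written in Chernoff-tilting form: your tilt parameter $t = \frac{(q-1)(1-\epsilon)}{\epsilon}$ is exactly the reciprocal of the ratio $\frac{\epsilon}{(q-1)(1-\epsilon)}$ that the paper inserts into each term, and both proofs then come down to the binomial theorem --- the paper normalizes so the full sum is $(\epsilon+(1-\epsilon))^k=1$, while you leave it unnormalized and evaluate $t^{\epsilon k}\bigl(1+\tfrac{q-1}{t}\bigr)^k = q^{H_q(\epsilon)k}$. Your admissibility check ($t>1$ iff $\epsilon<1-1/q$) and the strictness argument are both sound, so there is nothing to fix.
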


\begin{proof}
  Note that
  \[
  q^{-H_q(\epsilon)} = \left(\frac{\epsilon}{(q-1)(1-\epsilon)}\right)^\epsilon(1-\epsilon) < (1 - \epsilon).
  \]
  Then
  \begin{align*}
    1 &= (\epsilon + (1 - \epsilon))^k\\
    &> \sum_{i=0}^{\epsilon k} \binom{k}{i} \epsilon^i(1-\epsilon)^{k-i}\\
    &= \sum_{i=0}^{\epsilon k} \binom{k}{i}(q-1)^i \left(\frac{\epsilon}{(q-1)(1-\epsilon)}\right)^i(1-\epsilon)^{k}\\
    &> \sum_{i=0}^{\epsilon k} \binom{k}{i}(q-1)^i \left(\frac{\epsilon}{(q-1)(1-\epsilon)}\right)^{\epsilon k}(1-\epsilon)^{k}\\
    &= q^{-H_q(\epsilon)k}\sum_{i=0}^{\epsilon k} \binom{k}{i}(q-1)^i
  \end{align*}
\end{proof}

\section{ Proof of Lemma~\ref{lem:l2pres} }

\begin{proof}
	By standard arguments (see, e.g.,~\cite{IN07}), for any $D>0$ we have
	\[
		\Pr\left[\|A(y_1-y_2)\|_2 \le \frac{\|y_1-y_2\|_2}{D}\right]
			\le \left(\frac{3}{D}\right)^m
	\]
	and
	\[
		\Pr[\|Az\|_2 \ge D\|z\|_2] \le e^{-m(D-1)^2/8}.
	\]
	Setting both right-hand sides to $\delta$ yields the lemma.
\end{proof}

\section{Proof of Lemma~\ref{lem:l2small} }

\begin{proof}
	Consider the distribution of a single coordinate of $z$, say, $z_1$. The
	probability density of $|z_1|$ taking value $t\in[0,s]$ is proportional to the
	$(n-1)$-dimensional volume of $B_1^{(n-1)}(s-t)$, which in turn is proportional
	to $(s-t)^{n-1}$. Normalizing to ensure the probability integrates to 1, we
	derive this probability as
	\[
		p(|z_1|=t) = \frac{n}{s^n}(s-t)^{n-1}.
	\]
	It follows that, for any $D\in[0,s]$,
	\[
		\Pr[|z_1| > D] = \int_D^s \frac{n}{s^n}(s-t)^{n-1}\;dt= (1-D/s)^n.
	\]
	In particular, for any $\alpha>1$,
	\begin{eqnarray*}
		\Pr[|z_1| &>& \alpha s\log n/n] = (1 - \alpha\log n/n)^n < e^{-\alpha\log n}\\
							&=& 1/n^\alpha.
	\end{eqnarray*}
	Now, by symmetry this holds for every other coordinate $z_i$ of $z$ as well, so by
	the union bound
	\[
		\Pr[\|z\|_\infty > \alpha s\log n/n] < 1/n^{\alpha-1},
	\]
	and since $\|z\|_2 \le \sqrt{n}\cdot\|z\|_\infty$ for any vector $z$, the lemma
	follows.
\end{proof}

\end{document}